\title{On the Complexity of Half-Guarding Monotone Polygons}
\author{Hannah Miller Hillberg}{University of Wisconsin - Oshkosh, United States \and \url{https://uwosh.edu/cs/faculty-and-staff/hillberg/} }{hillbergh@uwosh.edu}{}{}
\author{Erik Krohn}{University of Wisconsin - Oshkosh, United States \and \url{https://faculty.cs.uwosh.edu/faculty/krohn/} }{krohne@uwosh.edu}{https://orcid.org/0000-0002-5832-8135
}{}
\author{Alex Pahlow}{University of Wisconsin - Oshkosh, United States}{pahloa45@uwosh.edu}{}{}
\authorrunning{H. M. Hillberg, E. Krohn and A. Pahlow} 
\keywords{Art Gallery Problem, Approximation Algorithm, NP-Hardness, Monotone Polygons, Half-Guards} 
\begin{document}

\maketitle

\begin{abstract}
	We consider a variant of the art gallery problem where all guards are limited to seeing to the right inside a monotone polygon. We call such guards: half-guards. We provide a polynomial-time approximation for point guarding the entire monotone polygon. We improve the best known approximation of $40$ from \cite{GKR17}, to $8$. We also provide an NP-hardness reduction for point guarding a monotone polygon with half-guards.
\end{abstract}

\section{Introduction}
An instance of the original \textit{art gallery problem} takes as input a simple polygon $P$. A polygon $P$ is defined by a set of points $V = \{v_1, v_2, \ldots, v_n\}$. There are edges connecting $(v_i, v_{i+1})$ where $i = 1, 2, \ldots, n-1$. There is also an edge connecting $(v_1, v_n)$. If these edges do not intersect other than at adjacent points in $V$ (or at $v_1$ and $v_n$), then $P$ is called a simple polygon. The edges of a simple polygon give us two regions: inside the polygon and outside the polygon. For any two points $p, q \in P$, we say that $p$ sees $q$ if the line segment $\overline{pq}$ does not go outside of $P$. The art gallery problem seeks to find a guarding set of points $G \subseteq P$ such that every point $p \in P$ is seen by a point in $G$. In the point guarding problem, guards can be placed anywhere inside of $P$. In the vertex guarding problem, guards are only allowed to be placed at points in $V$. The optimization problem is defined as finding the smallest such $G$.

\subsection{Previous Work}
There are many results about guarding art galleries. Several results related to hardness and approximations can be found in \cite{Aggarwal84,EfratH06,Eidenbenz98,Ghosh97,LL86,O87}. Whether a polynomial time constant factor approximation algorithm can be obtained for vertex guarding a simple polygon is a longstanding and well-known open problem, although a claim for one was made in \cite{BGP17}.

\noindent \textbf{Additional Polygon Structure.} Due to the inherent difficulty in fully understanding the art gallery problem for simple polygons, much work has been done guarding polygons with additional structure, see \cite{BGR17,BGR15,GKR17,NK13}. In this paper we consider monotone polygons.

\noindent \textbf{$\alpha$-Floodlights.} Motivated by the fact that many cameras and other sensors often cannot sense in 360\textdegree, previous works have considered the problem when guards have a fixed sensing angle $\alpha$ for some $0 < \alpha \leq 360$. This problem is often referred to as the \textit{$\alpha$-floodlight problem}. Some of the work on this problem has involved proving necessary and sufficient bounds on the number of $\alpha$-floodlights required to guard (or illuminate) an $n$ vertex simple polygon $P$, where floodlights are anchored at vertices in $P$ and no vertex is assigned more than one floodlight, see for example \cite{Speckmann2005,TOTH2002185}. Computing a minimum cardinality set of $\alpha$-floodlights to illuminate a simple polygon $P$ is APX-hard for both point guarding and vertex guarding \cite{AbdelkaderSHM15}. More specifically, 180\textdegree-floodlights, or \textit{half-guards}, see only in one direction. Half-guarding may have the ability to help with full-guarding. A \textit{full-guard} can see 360\textdegree. In \cite{ElbassioniKMMS11, Krohn2007}, the authors use half-guarding to show a 4-approximation for terrain guarding using full-guards. A constant factor approximation for half-guarding a monotone polygon was first shown in \cite{GKR17} and NP-hardness for vertex guarding a monotone polygon was shown in \cite{GKR18}.

\subsection{Definitions}
\noindent A simple polygon $P$ is \emph{$x$-monotone} (or simply \emph{monotone}) if any vertical line intersects the boundary of $P$ in at most two points. In this paper, we define \textit{half-guards} as guards that can see only to the right. Therefore, we redefine \emph{sees} as: a point $p$ sees a point $q$ if the line segment $\overline{pq}$ does not go outside of $P$ and $p.x \leq q.x$, where $p.x$ denotes the $x$-coordinate of a point $p$. In a monotone polygon $P$, let $l$ and $r$ denote the leftmost and rightmost point of $P$ respectively. Consider the ``top half'' of the boundary of $P$ by walking along the boundary clockwise from $l$ to $r$. We call this the \emph{ceiling} of $P$. We obtain the \emph{floor} of $P$ by walking counterclockwise along the boundary from $l$ to $r$.

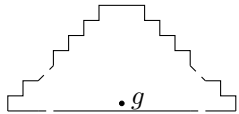
\begin{figure}[ht]
\centering
\begin{tikzpicture}[scale=0.2]
\draw (0,0) -- (0,1) -- (1,1) -- (1,2) -- (2,2);
\draw[dashed] (2,2) -- (3,3);
\draw (3,3) -- (3,4) -- (4,4) -- (4,5) -- (5,5) -- (5,6) -- (6,6) -- (6,7) -- (9,7) -- (9,6) -- (10,6) -- (10,5) -- (11,5) -- (11,4) -- (12,4) -- (12,3);
\draw[dashed] (12,3) -- (13,2);
\draw (13,2) -- (14,2) -- (14,1) -- (15,1) -- (15,0);
\draw (0,0) -- (2,0);
\draw[dashed] (2,0) -- (3,0);
\draw (3,0) -- (12,0);
\draw[dashed] (12,0) -- (13,0);
\draw (13,0) -- (15,0);

\draw[fill](7.5,0.5) circle [radius=0.15];

\node[right] at (7.5,0.7) {$g$};

\end{tikzpicture}
\caption{\centering A regular guard can see this entire monotone polygon, but needs $\Omega(n)$ half-guards.}
\label{fig:comparison-guarding}
\end{figure}

\subsection{Our Contribution}
\noindent Krohn and Nilsson \cite{NK13} give a constant factor approximation for point guarding a monotone polygon using full-guards. There are monotone polygons $P$ that can be completely guarded with one full-guard that require $\Omega(n)$ half-guards considered in this paper, see Figure \ref{fig:comparison-guarding}. Due to the restricted nature of half-guards, new observations are needed to obtain the approximation given in this paper. A $40$-approximation for this problem was presented in \cite{GKR17}. The algorithm in \cite{GKR17} places guards in 5 steps: guard the ceiling vertices, then the floor vertices, then the entire ceiling boundary, then the entire floor boundary, and finally any missing portions of the interior. We propose a modified algorithm that requires only 3 steps: guarding the entire ceiling, then the entire floor, and lastly the interior. By modifying the algorithm and providing improved analysis, we obtain an $8$-approximation. 

In addition, we show that point guarding a monotone polygon with half-guards is NP-hard. An NP-hardness proof for vertex guarding a monotone polygon with half-guards was presented in \cite{GKR18}. However, if a guard was moved off a vertex, it would see too much of the polygon and the reduction would fail. Thus, new insights were needed for point guarding. 

The remainder of the paper is organized as follows. Section \ref{approxguarding} gives an algorithm for point guarding a monotone polygon using half-guards. Section \ref{hard} provides an NP-hardness proof for point guarding a monotone polygon using half-guards. Finally, Section \ref{conclusion} gives a conclusion and possible future work.

\section{8-approximation for Point Guarding a Monotone Polygon with Half-Guards}\label{approxguarding}

We start by giving an algorithm for point guarding the boundary of a monotone polygon $P$ with half-guards. We first give a $2$-approximation algorithm for guarding the entire ceiling. A symmetric algorithm works for guarding the entire floor giving us a $4$-approximation for guarding the entire boundary of the polygon. Finally, even though the entire boundary is seen, portions of the interior may be unseen. We show that by doubling the number of guards, we can guarantee the entire polygon is seen giving an $8$-approximation.

Before we describe the algorithm, we provide some additional notation. A vertical line that goes through a point $p$ is denoted $l_p$. Given two points $p,q$ in $P$ such that $p.x < q.x$, we use $(p,q)$ to denote the points $s$ such that $p.x < s.x < q.x$. Similarly, we use $(p,q]$ to denote points $s$ such that $p.x < s.x \leq q.x$.

We first give a high level overview of the algorithm for guarding the entire ceiling boundary. Any feasible solution must place a guard at the leftmost vertex where the ceiling and floor come together (or this vertex would not be seen). We begin by placing a guard here. We iteratively place guards from left to right, letting $S$ denote the guards the algorithm has already placed. 
When placing the next guard, we let $p$ denote the rightmost point on the ceiling such that the entire ceiling from $[l,p]$ is seen by some guard in $S$. In other words, the point on the ceiling to the right of $p$ is not seen by any guard in $S$. Note that $p$ may be a ceiling vertex or any point on the ceiling. The next guard $g$ that is placed will lie somewhere on the line $l_p$. We initially place $g$ at the intersection of $l_p$ and the floor, and we slide $g$ upwards vertically along $l_p$. The algorithm locks in a final position for the guard $g$ by sliding it upwards along $l_p$ until moving it any higher would cause $g$ to no longer see some unseen point on the ceiling (i.e. a point not seen by some $g' \in S$); let $r$ be the first such point. See, for example, Figure \ref{fig:contiguous_line}. In this figure, when $g$ is initially placed on the floor, it does not see $r$, but as we slide $g$ up the line $l_p$, $r$ becomes a new point that $g$ can see. If we slide $g$ up any higher than as depicted in the figure, then $g$ would no longer see $r$, and therefore we lock $g$ in that position. We then add $g$ to $S$, and we repeat this procedure until the entire ceiling is guarded. The ceiling guarding algorithm is formally described in Algorithm \ref{alg:CeilingGuard}.

\begin{algorithm}
\caption{Ceiling Guard}
\label{alg:CeilingGuard}
\begin{algorithmic}[1]
	\Procedure{Ceiling Guard}{monotone polygon $P$}
    	\State $S \gets \{g\}$ such that $g$ is placed at the leftmost point $l$.
    	\While{there is a point on the ceiling that is not seen by a guard in $S$}
        	\State Let $p$ be the rightmost ceiling point such that the entire ceiling boundary $[l,p]$ is seen by some guard in $S$. Place a guard $g$ where $l_p$ intersects the floor and slide $g$ up. Let $r$ be the first ceiling point not seen by any guard in $S$ that $g$ would stop seeing if $g$ moved any further up. Place $g$ at the highest location on $l_p$ such that $g$ sees $r$.
            \State $S \gets S \cup \{g\}$.
            \label{step:guardPlace}
        \EndWhile
        
        \State \Return $S$
    \EndProcedure
\end{algorithmic}
\end{algorithm}

\begin{figure}[ht]
    \centering
    \includegraphics[scale=0.23]{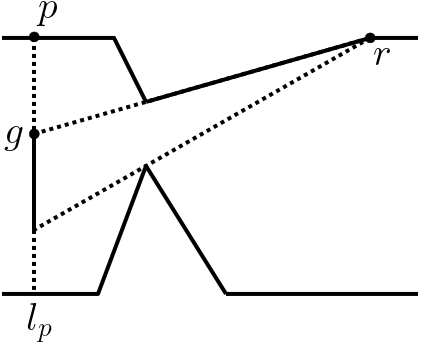}
    \caption{A guard $g$ slides up $l_p$ and sees a point $r$. If $g$ goes any higher, it will stop seeing $r$.}
    \label{fig:contiguous_line}
\end{figure}

\subsection{Sliding Analysis}
All steps, except the sliding step, can be trivially done in polynomial time. The analysis of \cite{GKR17} uses a similar sliding step but only considers guarding the vertices. When considering an infinite number of points on the ceiling, it is not immediately clear that the sliding can be done in polynomial time since each time a guard moves an $\epsilon$ amount upwards, it will see a different part of the boundary. We prove that there are at most $O(n^3)$ potential guard locations on $l_p$ that must be considered.

\begin{lemma}\label{floorBlocker} Consider a point $g$ and a point $f$ such that $g.x < f.x$. If $g$ sees $f$, then the floor (resp. ceiling) cannot block $g$ from seeing any ceiling (resp. floor) point in $(g,f)$. \end{lemma}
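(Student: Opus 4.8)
The plan is to prove the ceiling case directly and obtain the floor case by the top--bottom reflection. So assume $g$ sees $f$ with $g.x < f.x$, fix a ceiling point $c$ with $g.x < c.x < f.x$, and aim to show that $\overline{gc}$ never passes strictly below the floor. The whole argument takes place on vertical cross-sections: since $P$ is $x$-monotone, the vertical line at any horizontal position $x$ in the relevant range meets $P$ in a single interval $[\phi(x),\psi(x)]$, where $\phi(x)$ is the floor height and $\psi(x)$ the ceiling height, and $c=(c.x,\psi(c.x))$.

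First I would introduce the two affine ``height functions'' $h(x)$ and $H(x)$, giving the $y$-coordinate at horizontal position $x$ of the segments $\overline{gc}$ and $\overline{gf}$ respectively (well defined since $g.x<c.x<f.x$); both are linear in $x$ and satisfy $h(g.x)=H(g.x)=g.y$. The key input is that $g$ sees $f$, i.e.\ $\overline{gf}\subseteq P$, which gives $\phi(x)\le H(x)\le\psi(x)$ for every $x\in[g.x,f.x]$; in particular $H(c.x)\le\psi(c.x)=c.y$ because $g.x<c.x<f.x$.

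Next I would argue by contradiction: if the floor blocks $g$ from seeing $c$, then some point of $\overline{gc}$ lies strictly below the floor, so there is an $x^*\in(g.x,c.x)$ with $h(x^*)<\phi(x^*)\le H(x^*)$. Now $h$ and $H$ are two lines that agree at $x=g.x$ and are separated, in the order $h<H$, at the point $x^*>g.x$; hence $h$ has the strictly smaller slope and $h(x)<H(x)$ for \emph{every} $x>g.x$. Evaluating at $x=c.x$ yields $c.y=h(c.x)<H(c.x)\le c.y$, a contradiction. This completes the ceiling direction, and the floor direction is the mirror image (reflect $P$ across a horizontal line, swapping the roles of floor and ceiling).

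I do not expect a serious obstacle; the argument is essentially a slope comparison between $\overline{gc}$ and the known-interior segment $\overline{gf}$. The only points requiring care are (i) stating precisely what ``the floor blocks $g$ from seeing $c$'' means, so that it genuinely produces a point of $\overline{gc}$ strictly below $\phi$ rather than a harmless tangency, and (ii) checking the degenerate boundary cases where $g.x$ or $c.x$ is an extreme $x$-coordinate of $P$ (so that an interval $[\phi,\psi]$ collapses to a point) — these are absorbed by the same chain of inequalities.
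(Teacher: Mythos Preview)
Your argument is correct and is essentially the paper's own proof made analytically precise: the paper argues by contradiction that if the floor pierces $\overline{gc}$ then, by monotonicity, it must also pierce $\overline{gf}$, contradicting that $g$ sees $f$; your slope comparison between the affine height functions $h$ and $H$ is exactly a rigorous formulation of that step. The only difference is presentation---the paper stays pictorial while you work in coordinates---so no genuinely different idea is in play.
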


\begin{proof}
Assume that the polygon is monotone and $g$ sees $f$. Consider a point on the ceiling $p$ such that $g.x < p.x < f.x$. If $g$ is being blocked from seeing $p$ because of a floor vertex, then $g$ cannot see any point to the right of $p$, see Figure \ref{fig:floor_blocks_g}(left). In order for the floor to block $g$ from $p$, the floor must pierce the $\overline{gp}$ line segment. If the polygon is monotone, then the floor pierces the $\overline{gf}$ line segment, a contradiction that $g$ sees $f$. If the floor pierces the $\overline{gp}$ line segment and $g$ sees $f$, then the polygon is not monotone, see Figure \ref{fig:floor_blocks_g}(right).
\end{proof}

\begin{corollary}\label{cor:noRight}
If a point $g$ is blocked by the floor (resp. ceiling) from seeing a point $p$ on the ceiling (resp. floor), then $g$ does not see any points to the right of $p$.
\end{corollary}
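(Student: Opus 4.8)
The plan is to derive Corollary~\ref{cor:noRight} as an almost immediate consequence of Lemma~\ref{floorBlocker}, essentially by contraposition. Suppose $g$ is blocked by the floor from seeing a point $p$ on the ceiling (the ceiling/floor case is symmetric), and suppose for contradiction that $g$ does see some point $q$ with $q.x > p.x$. I would first dispose of the case $q.x = p.x$ or $q$ lying on the same vertical line by noting $p$ is on the ceiling and $q$ strictly to its right, so $g.x \le p.x < q.x$; in particular $g$ sees a point (namely $q$) strictly to the right of $p$, and $p \in (g,q)$.

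Now I would invoke Lemma~\ref{floorBlocker} with $f := q$: since $g.x < q.x$ and $g$ sees $q$, the lemma asserts that the floor cannot block $g$ from seeing any ceiling point in $(g,q)$. But $p$ is a ceiling point in $(g,q)$ that $g$ is, by assumption, blocked from seeing by the floor — contradiction. Hence no such $q$ exists, i.e. $g$ sees no point to the right of $p$. The symmetric statement (ceiling blocking $g$ from a floor point) follows by the parenthetical version of Lemma~\ref{floorBlocker}, or equivalently by reflecting the polygon about a horizontal line, which preserves monotonicity and the left-to-right visibility convention.

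The only mild subtlety — and the one place I would be careful — is the boundary behavior at $q.x = p.x$: if $q$ were allowed to sit directly above or below $p$ on the line $l_p$, then $p \notin (g,q)$ and the lemma would not directly apply. But ``to the right of $p$'' should be read as strictly larger $x$-coordinate (consistent with the paper's use of $(p,q)$ notation for the open vertical strip), so this case does not arise; if one wanted to be fully rigorous one could observe that a point on $l_p$ itself is handled by a limiting argument or simply excluded by the phrasing. I do not anticipate any real obstacle here: the corollary is a repackaging of the ``$g$ cannot see any point to the right of $p$'' clause already proved inside Lemma~\ref{floorBlocker}, stated in a form convenient for later use in the sliding analysis.
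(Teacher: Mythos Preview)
Your proposal is correct and matches the paper's intent: the corollary is stated without a separate proof because it is the immediate contrapositive of Lemma~\ref{floorBlocker}, exactly as you argue. Your handling of the boundary case $q.x = p.x$ is more careful than necessary but not wrong.
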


\begin{figure}[ht]
    \centering
    \includegraphics[scale=0.14]{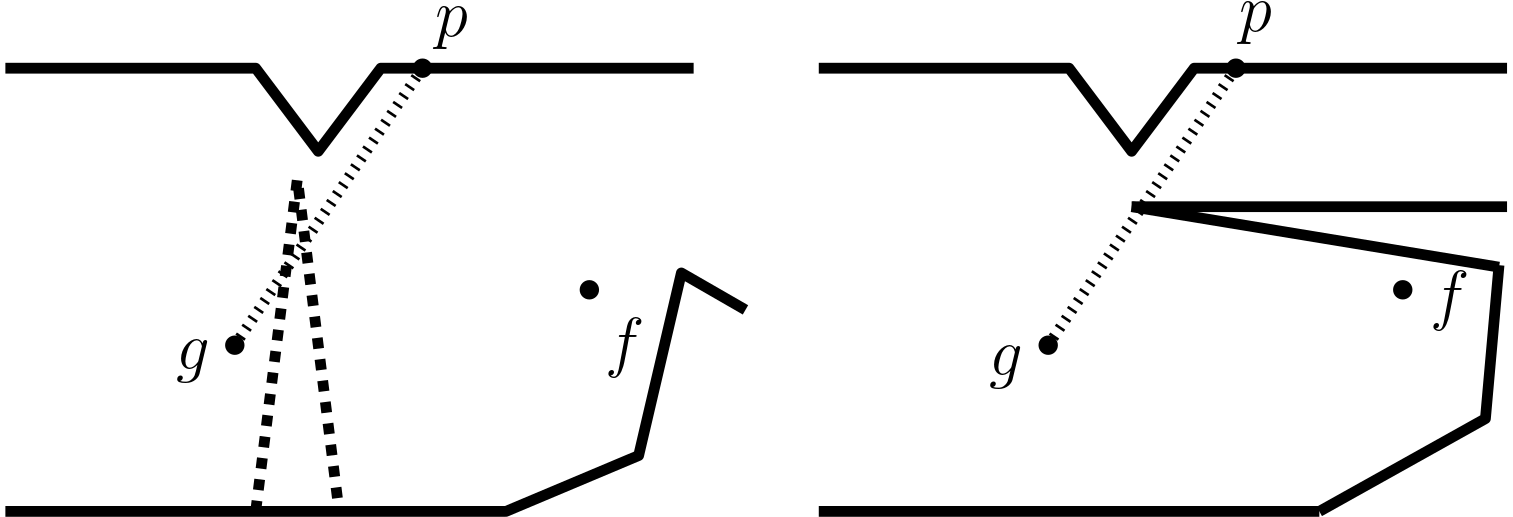}
    \caption{Assume $g.x<p.x<f.x$ and that $p$ is on the ceiling. (Left) If the floor blocks $g$ from a ceiling point $p$ and the polygon is monotone, then $g$ will not see any point $f$. (Right) If $g$ sees the point $f$ and the floor blocks $g$ from $p$, then the polygon is not monotone.}
    \label{fig:floor_blocks_g}
\end{figure}

To prove the $O(n^3)$ bound on potential guard locations, there are several cases to consider on whether or not $g$ is moved upwards and how far it needs to move. These cases inform the possible guard locations on $l_p$.

\begin{figure}[ht]
    \centering
    \includegraphics[scale=0.34]{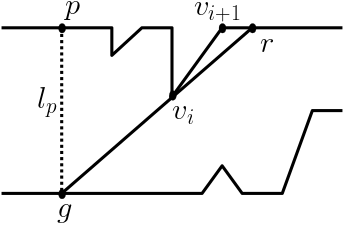}
    \caption{Ceiling visibility is maximal when $g$ is on the floor. If $g$ is moved up, then $r$ would not be seen.}
    \label{fig:gOnFloor}
\end{figure}

\noindent \textbf{Case 1:} There exists some vertex $v_i$ such that $g$ sees vertex $v_i$, but does not see vertex $v_{i+1}$ because $v_i$ is blocking $g$ from seeing $v_{i+1}$, see Figure \ref{fig:gOnFloor}. If there are multiple $v_i$ candidates, we choose the $v_i$ that is leftmost. Shoot a ray from $g$ through $v_i$ and let $r$ be the point on the boundary that is hit. The location of $r$ determines what the algorithm does.

\noindent \textbf{Case 1a:} If $r$ is on the ceiling, then no guard to the left of $g$ is able to see $r$. If $g$ is slid up, then $g$ would no longer see $r$. 

\noindent \textbf{Case 1b:} If $r$ is on the floor, then moving $g$ upwards will not cause $g$ to see any more ceiling points to the right of $r$ because $v_i$ is blocking $g$ from seeing them, see Figure \ref{fig:gAndrOnFloor}. By Lemma \ref{floorBlocker}, the floor cannot be blocking $g$ from seeing any ceiling points to the left of $r$. Therefore, moving $g$ upwards will not result in seeing any more ceiling points since the floor cannot be blocking $g$ from any point on the ceiling to the left of $r$. However, the approximation analysis relies on $g$ being as high as possible on $l_p$. Consider guards $S$ placed by the algorithm up to this point. Now consider ceiling points $R$ that are not seen by any guard $g' \in S$. We slide $g$ upwards until it would have stopped seeing some point $r' \in R$. We set $r = r'$, see Figure \ref{fig:gAndrOnFloor}. In the case where any point on $l_p$ sees all of the unseen ceiling points of $[p,v_i]$, we place $g$ on the ceiling at point $p$ and we assign $r$ to be a ceiling point directly to the right of $p$.

\begin{figure}[ht]
    \centering
    \includegraphics[scale=0.31]{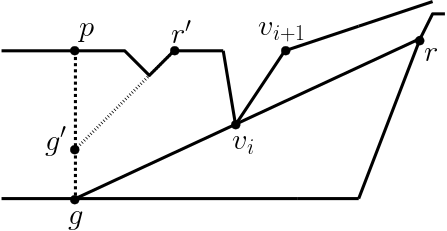}
    \caption{Ceiling visibility is maximal when $g$ is on the floor. The guard $g$ moves up until it stops seeing $r'$.}
    \label{fig:gAndrOnFloor}
\end{figure}

\noindent \textbf{Case 2:} If, for all $v_i$ that are seen by $g$, $g$ is not blocked from seeing $v_{i+1}$ by $v_i$, we slide $g$ upwards until $g$ would stop seeing an unseen ceiling point. In Figure \ref{fig:see_partial_edge}, a portion of the edge $e = [v_{i+1}, v_{i+2}]$, namely $[v_{i+1},r)$, is seen by a previously placed guard $g'$. Using Case 1, the guard $g''$ is placed to see the ceiling between $(a,p]$. If $g''$ is placed any higher, it would stop seeing $b$. Consider all such $r$ points on edges that are partially seen. Shoot a ray from $r$ through all vertices until it hits $l_p$. These locations on $l_p$ are the $G_r$ potential guard locations. In Figure \ref{fig:see_partial_edge}, the guard $g$ is placed at $g_2 \in G_r$ because it would miss the points on $e$ to the right of $r$ if moved above $g_2$.

\begin{figure}[ht]
    \centering
    \includegraphics[scale=0.15]{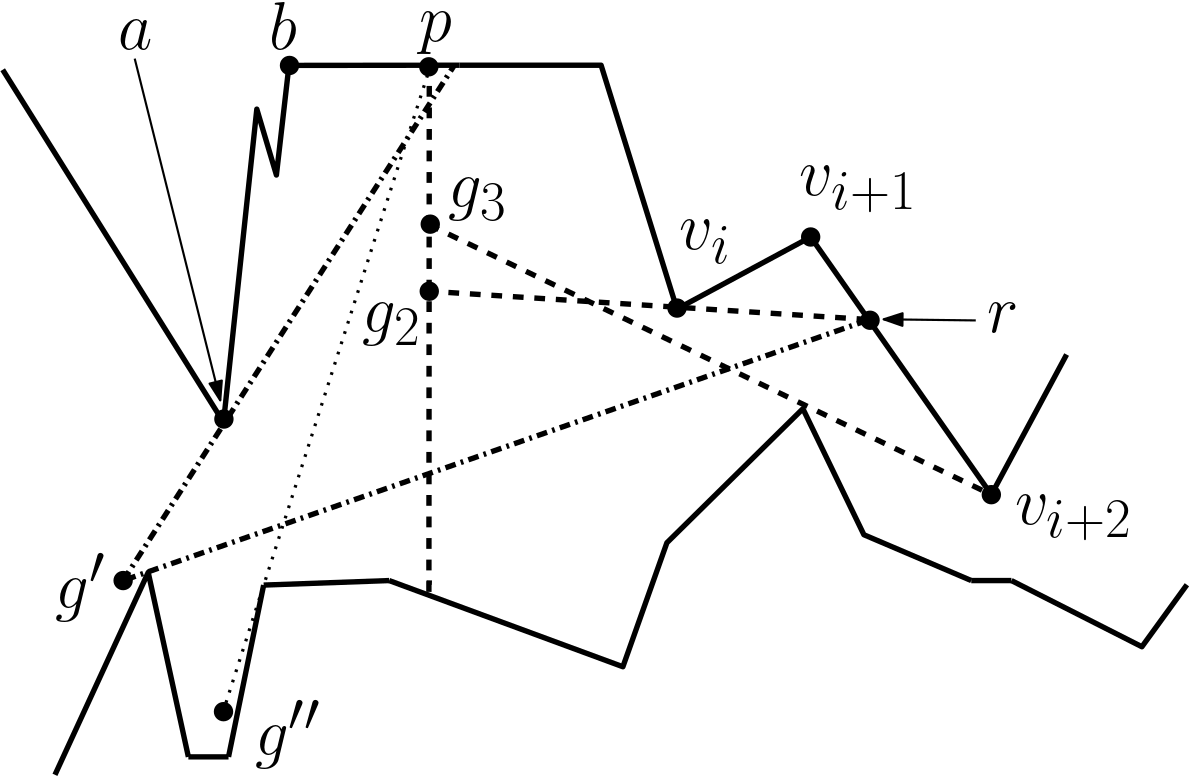}
    \caption{No point on $l_p$ sees all of $[r, v_{i+2}]$. The algorithm places a guard at $g_2$ to ensure points on the ceiling from $r$ and to its right are seen. If only case 1 locations are considered, $g_3$ would be a potential guard location considered and a portion of the edge to the right of $r$ would be unseen.}
    \label{fig:see_partial_edge}
\end{figure}

The points on $l_p$ that the algorithm must consider for Cases 1 and 2 are given by the following locations on $l_p$: 

\noindent \textbf{Potential Guard Locations A:} Rays shot from a vertex through another vertex until it hits $l_p$. Note that this includes the rays extending each edge of the polygon. There are $n$ vertices giving a total of $O(n^2)$ potential guard locations on $l_p$.

\noindent \textbf{Potential Guard Locations B:} If a guard $g$ is placed anywhere in the polygon in the range of $[v_i, v_{i+1})$, where $v_i$ and $v_{i+1}$ are ceiling vertices, then $g$ must see vertex $v_{i+1}$. If it does not, then a floor vertex must have blocked $g$ from seeing $v_{i+1}$. However, visibility of $g$ to the ceiling is not lost if $g$ is pushed up such that it sees over the floor blocker and sees $v_{i+1}$. When $g$ is placed, the subsequent guard will be placed at or beyond $l_{v_{i+1}}$. Therefore, when a guard is being placed, at most $n$ guards have been placed already. Let $C(S)$ be the ceiling locations that are hit when one shoots a ray from each previously placed guard $g'\in S$ through every vertex. In Figure \ref{fig:see_partial_edge}, a ray shot from $g'$ through the nearby floor vertex hits the ceiling at location $r \in C(S)$. With $n$ vertices, each previously placed guard can contribute at most $n$ ceiling points to $C(S)$. A ray shot from every guard through every vertex gives an upper bound of: $\vert C(S) \vert = O(n^2)$. Finally, shoot a ray from all $r \in C(S)$ through all vertices until the ray hits $l_p$. For example in Figure \ref{fig:see_partial_edge}, a ray from $r$ through vertex $v_i$ intersects $l_p$ at $g_2$. 
With $\vert C(S) \vert = O(n^2)$ points, a ray shot from each through every vertex yields a total of $O(n^3)$ rays that may hit $l_p$.

\noindent \textbf{Potential Guard Locations C:} The two points where $l_p$ hits the ceiling and the floor.

Together, a trivial analysis of the number of potential guard locations (A), (B) and (C) could generate gives an polynomial upper bound on the number of locations the algorithm has to consider at $O(n^2) + O(n^3) + 2 = O(n^3)$. The following lemma is used to show sufficiency:

\begin{lemma}\label{noSeeMiddle} Consider points $x, y$ on a vertical line such that $x$ is strictly below (resp. above) $y$ and there is a point $q$ that is strictly to the right of the vertical line. If the floor (resp. ceiling) blocks $y$ from seeing $q$, then $x$ does not see $q$. \end{lemma}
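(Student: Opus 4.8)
The plan is to isolate one elementary fact about segments and then apply it. Let $l_x$ be the vertical line through $x$ and $y$, and recall that $q$ lies strictly to its right. I would first observe that, because $\overline{xq}$ and $\overline{yq}$ share the endpoint $q$ and have their other endpoints on $l_x$ with $x$ strictly below $y$, the segment $\overline{xq}$ lies strictly below $\overline{yq}$ at every $x$-coordinate strictly between $l_x$ and $q$. This needs no real computation: both segments are graphs of linear functions over the interval of $x$-coordinates from $l_x$ to $q$, they agree at $q$, and at $l_x$ the first is the lower one; hence their difference is a linear function that vanishes only at $q$ and is positive (taking $\overline{yq}$ minus $\overline{xq}$) everywhere to the left of $q$.

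Next, I would unpack the hypothesis that the floor blocks $y$ from seeing $q$: since $\overline{yq}$ leaves $P$ through the floor and $q\in P$ lies on or above the floor, there is an $x$-coordinate $t$ strictly between $l_x$ and $q$ at which the point of $\overline{yq}$ lies strictly below the floor of $P$ — a single-valued curve over the $x$-axis by monotonicity. Feeding this $t$ into the first step, the point of $\overline{xq}$ at $x$-coordinate $t$ is even lower, hence also strictly below the floor, so $\overline{xq}$ leaves $P$ and $x$ does not see $q$. The parenthetical ``(resp.)'' case ($x$ strictly above $y$, with the ceiling doing the blocking) then follows by reflecting $P$ about a horizontal line, which interchanges ceiling and floor together with the words ``above'' and ``below'' while preserving monotonicity and every visibility relation.

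The one step I expect to require care is the second one: making precise what ``blocked by the floor'' asserts and checking that the witness coordinate $t$ can be chosen strictly to the left of $q$, so that the strict inequality supplied by the first step genuinely applies there. Everything else is just the linear-interpolation comparison; one could also invoke Corollary~\ref{cor:noRight} to restate ``blocked by the floor'' in a convenient form, though it is not essential to the argument.
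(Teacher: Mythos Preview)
Your argument is correct and is essentially the same geometric observation the paper uses: by monotonicity the floor is a single-valued curve over the $x$-axis, and since $\overline{xq}$ lies below $\overline{yq}$ everywhere strictly between the vertical line and $q$, any floor point above $\overline{yq}$ is also above $\overline{xq}$. The paper simply phrases this tersely and by contradiction (assume $x$ sees $q$, note the floor must pierce $\overline{yq}$, conclude by monotonicity that it then pierces $\overline{xq}$), whereas you give the same content as a direct argument with the linear-interpolation comparison made explicit.
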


\begin{proof}
Assume that the polygon is monotone and $x$ sees $q$. The floor (resp. ceiling) cannot go through the line segment connecting $x$ and $q$. In order for the floor to block $y$ from $q$, the floor must pierce the $\overline{yq}$ line segment. If the polygon is monotone, then the floor pierces the $\overline{xq}$ line segment, a contradiction that $x$ sees $q$. If the floor pierces the $\overline{yq}$ line segment and $x$ sees $q$, then the polygon is not monotone, a contradiction. See Figure \ref{fig:xBlockedFromQ}.
\end{proof}

\begin{figure}[ht]
    \centering
    \includegraphics[scale=0.13]{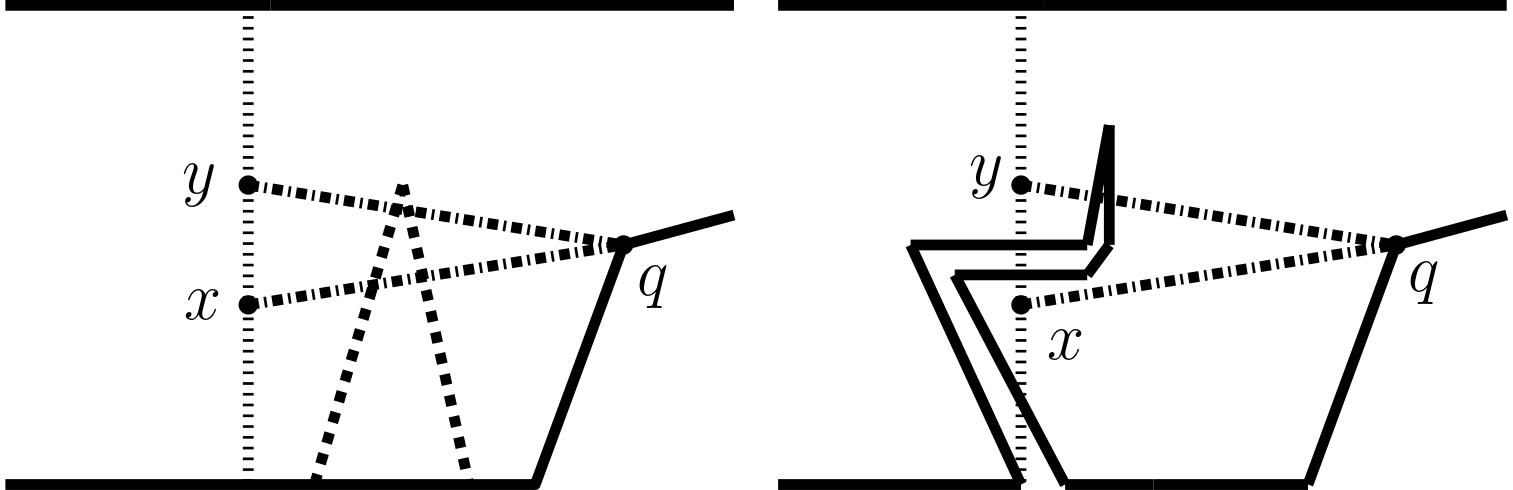}
    \caption{If $x$ and $y$ lie on a vertical line, $x$ is strictly below $y$, and $x$ sees $q$ where $x.x<q.x$, then the floor cannot block $y$ from $q$.}
    \label{fig:xBlockedFromQ}
\end{figure}

\noindent \textbf{Necessity of (A), (B) and (C):} 
If (A) guard locations are not included, then the $g''$ guard location from Figure \ref{fig:see_partial_edge} would not be considered. If one only considers locations on the $l_p$ line from (A) and (C), it is possible that a guard will be slid too far up where it misses a portion of a ceiling edge or it may not be slid far enough up. In Figure \ref{fig:see_partial_edge}, for example, the location $g_2$ would not have been considered without (B) being included. Lastly, the (C) guard locations must be included as they may be optimal, see Figure \ref{fig:floorPartC}. In the left part of the Figure, guard 1 must be on the floor. If it is moved upwards, the $[a,b]$ segment would be shifted to the right. Guard 2 would not see everything to the left of $a$ and would force a third guard to be placed much further left than guard 3. In the right part of Figure \ref{fig:floorPartC}, a guard must be placed on the ceiling to see point $a$. If guard 1 is not placed on the ceiling, then guard 2 would be further left and be unable to see point $b$.

\begin{figure}[ht]
    \centering
    \includegraphics[scale=0.15]{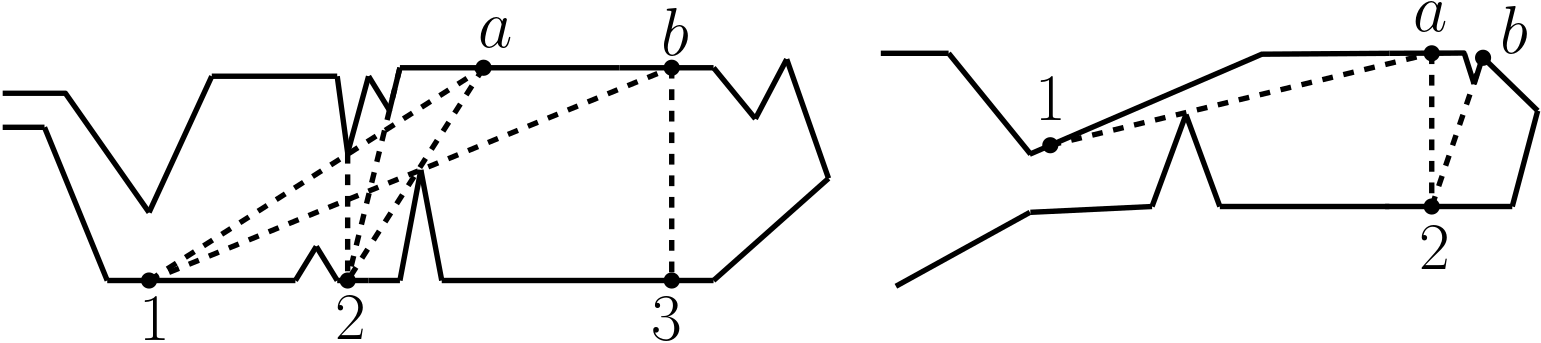}
    \caption{Necessity of guards placed on the floor and ceiling.}
    \label{fig:floorPartC}
\end{figure}

\noindent \textbf{Sufficiency of (A), (B) and (C):} Assume some set of guards $S$ has been placed. Consider 2 potential guard locations $x$ and $y$ on $l_p$ such that there are no potential guard locations between $x$ and $y$, and $x$ is below $y$. Assume that the algorithm placed a guard at location $y$. Now consider a ceiling point $q$ not seen by any guard $g \in S$. Neither $x$ nor $y$ sees $q$ but there is some location $z$ on $l_p$ that sees $q$ where $z$ is above $x$ and below $y$, see Figure \ref{fig:seesInMiddle}. In other words, $z$ is a location that the algorithm does not consider as a potential guard location. By Lemma \ref{noSeeMiddle}, the floor must block $x$ from seeing $q$ and the ceiling must block $y$ from seeing $q$. Let $v_i$ be the rightmost ceiling vertex that blocks $y$ from seeing $q$. If a ray shot from $v_{i+1}$ through $v_i$ hits the floor to the right of $l_p$, then the lowest guard location on $l_p$ to see $v_i$ would have been where the guard was placed. Such a guard location is below the $y$ location, see Figure \ref{fig:seesInMiddle} (left), a contradiction that the algorithm placed a guard at $y$. 

\begin{figure}[ht]
    \centering
    \includegraphics[scale=0.35]{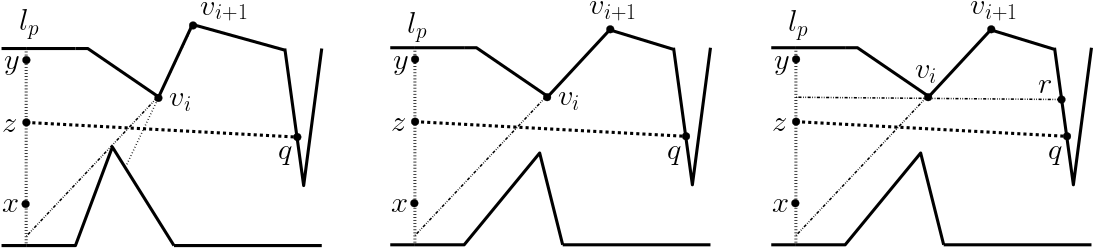}
    \caption{A guard is placed at location $y$ by the algorithm. Guard location $z$ must not exist.}
    \label{fig:seesInMiddle}
\end{figure}

If a ray shot from $v_{i+1}$ through $v_i$ goes through $l_p$, then one needs to consider multiple cases. Before considering the cases, note that this ray must intersect $l_p$ below $y$. If the ray were to have intersected above $y$, then either $y$ would have seen $v_{i+1}$, contradicting the fact that $v_i$ is the rightmost vertex that blocks $y$ from $q$. Or, if $y$ did not see $v_{i+1}$, then the floor would have blocked $y$ from $v_{i+1}$. By Corollary \ref{cor:noRight} and Lemma \ref{noSeeMiddle}, $z$ would not have seen $q$, a contradiction.

If $v_{i+1}$ was not seen by a previously placed guard, then the algorithm would have placed a guard at this intersection point because of the (A) potential guard locations, see Figure \ref{fig:seesInMiddle} (middle). This contradicts the algorithm choosing guard location $y$. If $v_{i+1}$ was seen by a previously placed guard, then consider the rightmost ceiling point $r$ between $(v_{i+1}, q)$ that was seen by a guard in $S$. Such a location must exist since some guard $g' \in S$ sees $v_{i+1}$. 
Potential guard locations (B) would have shot a ray from $r$ through $v_i$. If this ray hits $l_p$ below $y$, then the algorithm would have chosen that point as our guard location, contradicting that the algorithm chose $y$. If this ray hits $l_p$ above $y$, then $v_i$ could not have blocked $y$ from seeing $q$. Therefore, it is not possible for guard location $z$ to exist.

\subsection{Approximation Analysis}\label{approx}

We will now prove why Algorithm \ref{alg:CeilingGuard} will place no more than 2 times the number of guards in the optimal solution. The argument is similar to the argument presented in \cite{GKR17}. An optimal solution $\mathcal{O}$ is a minimum cardinality guard set such that for any point $p$ on the ceiling of $P$, there exists some $g \in \mathcal{O}$ that sees $p$. The argument will be a charging argument; every guard placed will be charged to a guard in $\mathcal{O}$ in a manner such that each guard in $\mathcal{O}$ will be charged at most twice. First, charge the leftmost guard placed to the optimal guard at the same location, call this guard $g_1$. All optimal guardsets include a guard at the leftmost vertex in the polygon else the leftmost vertex is unseen. Now consider two consecutive guards $g_i, g_{i+1} \in S$ returned by the algorithm. Since $g_1$ has already been charged, we must find an optimal guard to charge $g_{i+1}$ to.

\noindent \textbf{Case 1:} If at least one optimal guard is in $(g_i, g_{i+1}]$, then we charge $g_{i+1}$ to any optimal guard, chosen arbitrarily, in that range.

\noindent \textbf{Case 2:} 
If there is no optimal guard in $(g_i, g_{i+1}]$, then consider the point on the ceiling directly above $g_{i+1}$, call this point $p$. The $g_{i+1}$ guard was placed on $l_p$ because no previously placed guard saw the ceiling point to the right of $p$ and the entire ceiling from $[l,p]$ is seen guards in by $S$. Consider the guard $g_i$. If $g_i$ sees $p$, then it must be the case that the floor blocks $g_i$ from seeing to the right of $p$ (otherwise $p$ was chosen incorrectly), see Figure \ref{fig:otherGuardExists}(left). By Corollary \ref{cor:noRight}, $g_i$ does not see any ceiling point to the right of $p$. If $g_i$ does not see $p$, then some previously placed guard must have seen $p$. Consider the first floor vertex $v_k$ that blocks $g_i$ from seeing $p$. Since $p$ was seen by a previously placed guard, a ray was shot from $p$ through $v_k$ when considering where to place $g_i$. Let $g_p$ be the point on $l_{g_i}$ where the $\overrightarrow{pv_k}$ ray hits. The $g_i$ guard must be below $g_p$ otherwise $g_i$ would have seen $p$ and more importantly, the ceiling point to the right of $p$, see Figure \ref{fig:otherGuardExists}(right). In either case, $g_i$ does not see to the right of $p$. The reason that $g_i$ stopped moving upwards is because it saw some point $r$ on the ceiling that no previously placed guard saw. Since $g_i$ cannot see to the right of $p$, $r$ must be to the left of $p$. By assumption, the optimal guard $o'$ that sees $r$ must be to the left of $g_i$. If $g_i$ were to have moved any higher up, it would have missed $r$. Any guard that sees $r$ must be ``below'' the $\overline{g_ir}$ line, see the gray shaded region of Figure \ref{fig:otherGuardExists}. Any point on the ceiling that $o'$ sees to the right of $r$, $g_i$ will also see (Lemma 1, \cite{GKR17}). Therefore, $g_i$ dominates $o'$ with respect to the ceiling to the right of $r$. We charge $g_{i+1}$ to $o'$ and $o'$ cannot be charged again.

\begin{figure}[ht]
    \centering
    \includegraphics[scale=0.15]{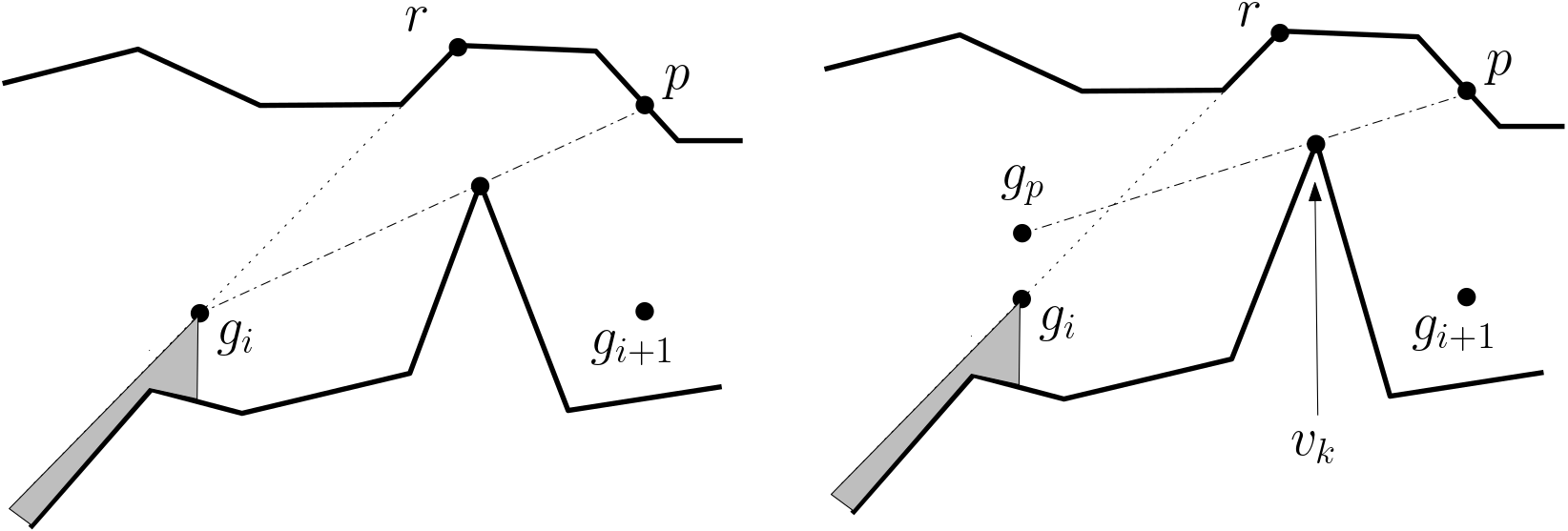}
    \caption{If no optimal guard exists in $(g_i, g_{i+1}]$, then $o'$ must exist to see $r$ such that $o'.x \leq g_i.x$.}
    \label{fig:otherGuardExists}
\end{figure}

\begin{figure}[ht]
    \centering
    \includegraphics[scale=0.27]{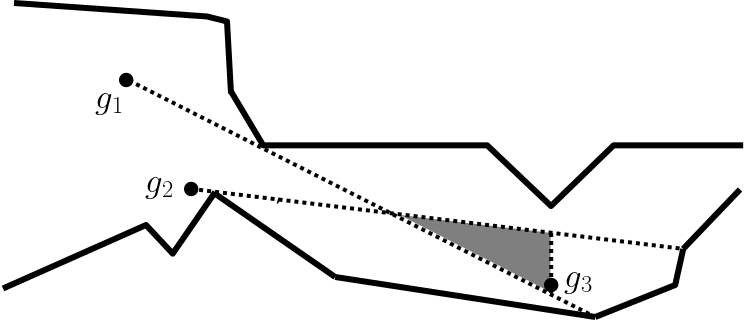}
    \caption{$g_1, g_2$ and $g_3$ see all of the boundary to the right of $g_1$. The shaded region is unguarded.}
    \label{fig:middlePocket}
\end{figure}

The entire ceiling can be guarded with at most $2\cdot\vert \mathcal{O}\vert$ guards. A similar algorithm is applied to the floor to give at most $4\cdot\vert \mathcal{O}\vert$ guards to guard the entire boundary. Finally, even though the entire boundary is guarded, it is possible that a portion of the interior is unseen, see Figure \ref{fig:middlePocket}. Let us assume that a guardset $G = \{g_1, g_2,\ldots g_k\}$ guards the entire boundary of a monotone polygon such that for all $i$, $g_i < g_{i+1}$. In \cite{KrohnN12}, they prove that between any consecutive guards of $G$, a region can exist that is unseen by any of the guards in $G$. However, they prove that the region is convex and can be guarded with 1 additional guard. If $\vert G \vert = k$, then there are at most $k-1$ guards that need to be added to guard these unseen interior regions. This doubles the approximation to give us the following theorem:

\begin{theorem}
There is a polynomial-time $8$-approximation algorithm for point guarding a monotone polygon with half-guards.
\end{theorem}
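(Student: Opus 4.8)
The plan is to assemble the three ingredients developed above---the ceiling-guarding algorithm of Algorithm~\ref{alg:CeilingGuard}, its polynomial-time implementation, and the charging argument of Section~\ref{approx}---and then to account separately for the floor and for the interior pockets, using a lower bound that ties the cost of these subproblems to the true optimum.

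First, I would argue that Algorithm~\ref{alg:CeilingGuard} runs in polynomial time. Every line other than the sliding step is clearly polynomial, so the only issue is bounding the number of candidate positions for $g$ on $l_p$. Using Lemma~\ref{floorBlocker}, Corollary~\ref{cor:noRight}, and Lemma~\ref{noSeeMiddle}, together with the case analysis (Cases 1 and 2) and the enumeration of Potential Guard Locations (A), (B), and (C), there are only $O(n^3)$ positions on $l_p$ that the algorithm must ever test, and testing each is polynomial; the necessity and sufficiency discussion shows this set of positions is exactly what is needed. Since each new guard forces the sweep point $p$ strictly past the next ceiling vertex, at most $O(n)$ guards are placed, so the whole procedure is polynomial.

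Second, I would invoke the charging argument: charge the leftmost guard to the mandatory optimal guard at $l$; for two consecutive algorithm guards $g_i, g_{i+1}$, either Case~1 holds (an optimal guard lies in $(g_i,g_{i+1}]$, and we charge $g_{i+1}$ there) or Case~2 holds (no such optimal guard exists, in which case the point $r$ that locked $g_i$ in place lies left of $p$ and is seen by an optimal guard $o'$ with $o'.x \le g_i.x$, and $g_i$ dominates $o'$ on the ceiling to the right of $r$, so we charge $g_{i+1}$ to $o'$). Each optimal ceiling guard is charged at most once per case, hence at most twice, so the ceiling is guarded with at most $2\lvert\mathcal{O}_c\rvert$ guards, where $\mathcal{O}_c$ is an optimal ceiling-guarding set. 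The key observation for the theorem is that any guard set guarding all of $P$ in particular guards the ceiling and the floor, so $\lvert\mathcal{O}_c\rvert \le \lvert\mathcal{O}\rvert$ and $\lvert\mathcal{O}_f\rvert \le \lvert\mathcal{O}\rvert$ for an optimal full-polygon guard set $\mathcal{O}$. Running Ceiling Guard and the mirror-image Floor Guard produces a set $G$ covering the entire boundary with $\lvert G\rvert \le 2\lvert\mathcal{O}_c\rvert + 2\lvert\mathcal{O}_f\rvert \le 4\lvert\mathcal{O}\rvert$. Finally, by \cite{KrohnN12}, after sorting $G$ from left to right, every unseen part of $P$ is a pocket between consecutive guards that is convex, hence seen by one additional guard; adding at most $\lvert G\rvert - 1$ such guards yields a feasible set of size at most $2\lvert G\rvert \le 8\lvert\mathcal{O}\rvert$.

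The main obstacle is the polynomial-time guarantee for the sliding step. Unlike the vertex-guarding analysis of \cite{GKR17}, moving $g$ by an infinitesimal amount changes which ceiling points it sees, so one must show that only polynomially many break-event positions on $l_p$ are relevant and, crucially, that the algorithm's locking rule always lands on one of the enumerated positions---this is precisely the role of the sufficiency argument built on Lemma~\ref{noSeeMiddle} and Corollary~\ref{cor:noRight}. Once the earlier lemmas are in hand, the remaining steps (the charging count, the floor symmetry, the lower bound $\lvert\mathcal{O}_c\rvert\le\lvert\mathcal{O}\rvert$, and the interior-pocket argument) are essentially bookkeeping, and the $8$ falls out as $2\cdot 2\cdot 2$.
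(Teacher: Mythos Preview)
Your proposal is correct and follows essentially the same route as the paper: polynomial running time via the $O(n^3)$ bound on candidate positions on $l_p$ (together with the observation that at most $O(n)$ guards are placed), the two-charge argument for the ceiling, the symmetric floor algorithm, and the convex-pocket step from \cite{KrohnN12} to double the bound. Your explicit use of $\lvert\mathcal{O}_c\rvert,\lvert\mathcal{O}_f\rvert\le\lvert\mathcal{O}\rvert$ is a minor clarification over the paper's presentation, not a different argument.
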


\section{NP-Hardness for Point Guarding a Monotone Polygon with Half-Guards}\label{hard}

\begin{figure}[ht]
    \centering
    \includegraphics[scale=0.24]{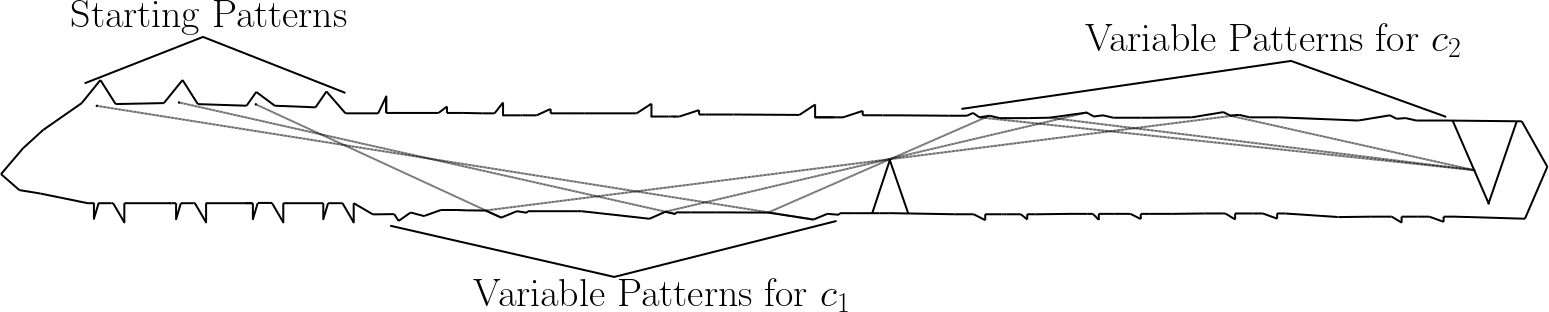}
    \caption{A high level overview of the reduction.}
    \label{fig:nphardOverview}
\end{figure}

In this section, we show that point guarding a monotone polygon with half-guards is NP-hard. NP-hardness for vertex guarding a monotone polygon with half-guards was shown in \cite{GKR18}. However, moving the guards a small amount off of the vertex causes the entire reduction to fail. Additional insight and additional gadgets/patterns were needed to show hardness for point guarding a monotone polygon with half-guards.

The reduction is from \emph{3SAT}. A 3SAT instance $(X, C)$ contains a set of Boolean variables, $X = \{x_1, x_2, \ldots, x_n\}$ and a set of clauses, $C = \{c_1, c_2, \ldots, c_m\}$. Each clause contains three literals, $c_i = (x_j \vee x_k \vee x_l)$. A 3SAT instance is satisfiable if a satisfying truth assignment for $X$ exists such that all clauses $c_i$ are true. We show that any 3SAT instance is polynomially transformable to an instance of point guarding a monotone polygon using half-guards. We construct a monotone polygon $P$ from the 3SAT instance such that $P$ is guardable by $K = n*(1 + 2m) + 1$ or fewer guards if and only if the 3SAT instance is satisfiable.

The high level overview of the reduction is that specific potential guard locations represent the truth values of the variables in the 3SAT instance. Starting patterns are placed on the ceiling on the left side of the polygon, see Figure \ref{fig:nphardOverview}. In these starting patterns, one must choose one of two guardset locations in order to guard distinguished edges for that particular pattern. A \emph{distinguished edge} is an edge that is only seen by a small number of specific guard locations. Then, to the right of the starting patterns, variable patterns are placed on the floor, then the ceiling, then the floor, and so on for as many clauses as are in the 3SAT instance. In each variable pattern, similar to a starting pattern, certain guard locations will represent a truth assignment of true/false for a variable ($x_i$). This Boolean information is then ``mirrored rightward'' such that there is a consistent choice of all true $x_i$ locations or all false $\overline{x_i}$ locations for each variable. This differs from previous results where Boolean information was mirrored from the ``left side'' of the polygon/terrain to the ``right side'' of the polygon/terrain and then back to the left side \cite{KingK11, KrohnN12}. A \emph{distinguished clause vertex} is placed to the right of each sequence of variable patterns such that only the guard locations representing the literals in the specific clause can see the distinguished clause vertex. A high level example of the entire reduction is shown in Figure \ref{fig:nphardOverview}.

\begin{figure}[ht]
    \centering
    \includegraphics[scale=0.45]{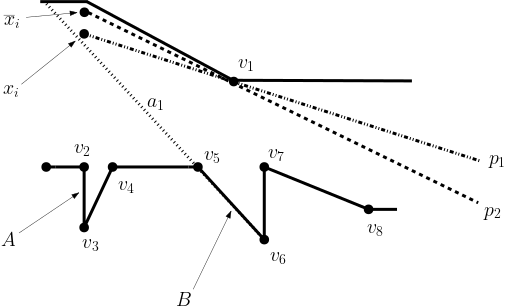}
    \caption{A starting pattern for variable $x_i$.}
    \label{fig:nphardStarting}
\end{figure}

\subsection{Hardness Details}\label{details}

\noindent \emph{Starting Pattern:} The starting pattern, shown in Figure \ref{fig:nphardStarting}, appears along the ceiling of the left side of the monotone polygon a total of $n$ times, each corresponding to a variable from the 3SAT instance, in order from left to right ($x_1, x_2, \ldots, x_n$). In each pattern, there are 2 distinguished edges with labels: $\{A = \overline{v_{2}v_{3}}, B = \overline{v_{5}v_6}\}$. These edges are seen by a specific, small range of points in each starting pattern. It is important to note that no other point in any other starting pattern sees these distinguished edges. 

In order to see $A$, a guard must be placed on the vertical line that goes through $v_2$. We can make the $A$ edge be almost vertical such that any guard placed to the left of the vertical line through $A$ will miss $v_3$. In order to guard the entire pattern with 1 guard, that guard must see $B$ as well. We are left with a small range of potential guard locations as seen in Figure \ref{fig:nphardStarting}. Any guard placed above $a_i$ on the $l_{v_2}$ line will see these distinguished edges. Foreshadowing the variable mirroring, a guard placed in the $x_i$ region will represent a true value for $x_i$. Any guard placed in the $\overline{x_i}$ region will represent a false value for $x_i$.

\begin{figure}[ht]
    \centering
    \includegraphics[scale=0.32]{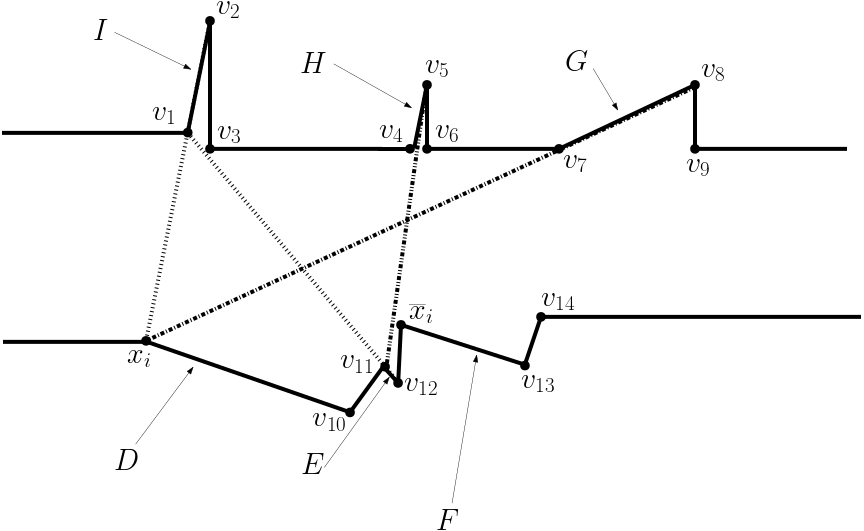}
    \caption{A variable pattern with critical lines of sight shown.}
    \label{fig:nphardVariable}
\end{figure}

\noindent \emph{Variable Pattern}: On the floor of the polygon to the right of all the $n$ starting patterns are the first $n$ variable patterns, one for each variable, that verify and propagate the assigned truth value of each variable. The variables are in reverse order from the initial starting pattern  
($x_n, x_{n-1}, \ldots, x_1$). 
When the variables are ``mirrored'' rightward again to the ceiling, the ordering will again reverse. 

A single variable pattern is shown in Figure \ref{fig:nphardVariable}. There are 6 distinguished edges located at $\{D = \overline{x_iv_{10}}, E = \overline{v_{11}v_{12}}, F = \overline{\overline{x_i}v_{13}}, G = \overline{v_{7}v_{8}}, H = \overline{v_{4}v_{5}}, I = \overline{v_{1}v_{2}}\}$. Of those, $E, G, H$ and $I$ are only visible within the variable pattern. Further, no guard that sees $I$ can also see $H$. Therefore, at least 2 guards are required to see the vertices in a variable pattern. However, with $D$ or $F$ seen by a previously placed guard, then 2 guards are sufficient for guarding the variable pattern. If neither $D$ nor $F$ is seen by a previously placed guard, then 3 guards are required to see all of $\{ D, E, F, G, H, I\}$. Assume that no guard in a previous pattern saw either $D$ or $F$ in the variable pattern. If this is the case, then a guard must be placed on (or to the left of) the $l_{x_i}$ line in order to see $D$. No guard that sees $D$ also sees $H$. If a guard that sees $D$ does not see $I$, then 3 guards are required since no guard sees both $H$ and $I$.

We assume that the guard that sees $D$ also sees $I$. The only guard that sees both $D$ and $I$ is at $x_i$. If the guard is moved left, it will miss $D$. If it is moved up, it will miss $I$. Therefore, a guard is placed at $x_i$ and we now also see $G$. This leaves $E, F$ and $H$ to be guarded. A guard that sees $E$ and $H$ must lie on $v_{11}$. If the guard is pushed higher, it will not see $H$. If it is pushed left or right, it will not see $E$. The vertex $\overline{x_i}$ blocks $v_{11}$ from seeing $F$. Therefore, the $F$ edge goes unseen and a third guard is required.

In summary, if a previously placed guard does not see $D$ nor $F$, 3 guards are required to guard the variable pattern.

Let us assume that $F$ is seen by a previously placed guard. If this happens, then $\{D, E, G, H, I\}$ must still be guarded. To see all of $D$, a guard must be placed on the vertical line through (or to the left of) $x_i$, see Figure \ref{fig:nphardVariable}. If the guard is placed above $x_i$, then $I$ would remain unguarded. If this happens, then 2 guards are still required to see $I$ and $H$. Therefore, a guard that sees $D$ must also see $I$, and the only such location is $x_i$. Placing a guard at $x_i$ sees edges $D, G$ and $I$, leaving $E$ and $H$ to be seen. The only guard that sees all of $E$ and $H$ is at vertex $v_{11}$. Therefore, two guards suffice and they are placed at $x_i$ and $v_{11}$.

Now let us assume that $D$ is seen by a previously placed guard. If this happens, then $\{E, F, G, H, I\}$ must still be guarded. In order to see $I$, a guard must be on (or to the left of) $l_{v_1}$. Edge $E$ is angled in such a way that it does not see any of the vertical line below $v_1$. 

If a guard is placed at $v_1$, then edges $E$ and $I$ are guarded and we need to place a second guard that sees $F, G$ and $H$. It should be noted that a guard placed at $v_1$ does not see to the right of $\overline{x_i}$ since $v_3$ is blocking $v_1$ from seeing too far to the right. A ray shot from $v_5$ through $v_4$ goes just above the $\overline{x_i}$ vertex, 
so placing a guard too far above $\overline{x_i}$ will not see edge $H$. Thus, a guard must be placed arbitrarily close to $\overline{x_i}$ to see $F, G$ and $H$. Therefore, two guards suffice in this instance and guards are placed arbitrarily close to both $\overline{x_i}$ and $v_1$.

It is not immediately obviously that the guard locations of $\overline{x_i}$ and $v_1$ are necessary guard locations when edge $D$ was seen by a previously placed guard. If one tries to place a guard that sees $I$ and a different subset of distinguished edges (other than $E$), one will require more than 2 guards to see the variable pattern. If a guard placed to see edge $I$ does not see edge $E$, then there are 3 regions on (or to the left of) $l_{v_1}$ where a guard could be placed. 

It is important to note that a ray shot from $v_{13}$ through $\overline{x_i}$ is above a ray shot from $v_8$ through $v_7$ when they cross $l_{v_1}$. In other words, no guard that sees $I$ can see both $F$ and $G$. The following 3 cases assume that the guard that sees $I$ does not see $E$ and it will lead to a contradiction in each case. This will show that when $D$ is seen by a previously placed guard, one must put guards arbitrarily close to both $\overline{x_i}$ and $v_1$.
\begin{enumerate}
    \item If a guard is placed on $l_{v_1}$ that sees neither $F$ nor $G$. In this case, edges $E, F, G$ and $H$ are unseen. A guard that sees $E$ and $H$ must lie on $v_{11}$. If the guard is pushed higher, it will not see $H$. If it is pushed left or right, it will not see $E$. The vertex $\overline{x_i}$ blocks $v_{11}$ from seeing $F$ and $G$. Therefore, the $F$ and $G$ edges go unseen and a third guard is required.
    
    \item If a guard placed on $l_{v_1}$ to see $I$ and also sees $F$, then edges $E, G$ and $H$ are unseen. Similar to the case above, a guard that sees $E$ and $H$ must lie on $v_{11}$. The vertex $\overline{x_i}$ blocks $v_{11}$ from seeing $G$. Therefore, the $G$ edge goes unseen and a third guard is required.

    \item If a guard placed on $l_{v_1}$ to see $I$ and also sees $G$, then edges $E, F$ and $H$ are unseen. Similar to the case above, a guard that sees $E$ and $H$ must lie on $v_{11}$. The vertex $\overline{x_i}$ blocks $v_{11}$ from seeing $F$. Therefore, the $F$ edge goes unseen and a third guard is required.

\end{enumerate}

To summarize, if $D$ or $F$ is seen, then in order to guard the entire variable pattern with two guards, there are exactly two sets of potential guard locations: $\{x_i, v_{11}\}$, and $\{v_1, \overline{x_i}\}$.

\begin{figure}[ht]
    \centering
    \includegraphics[scale=0.22]{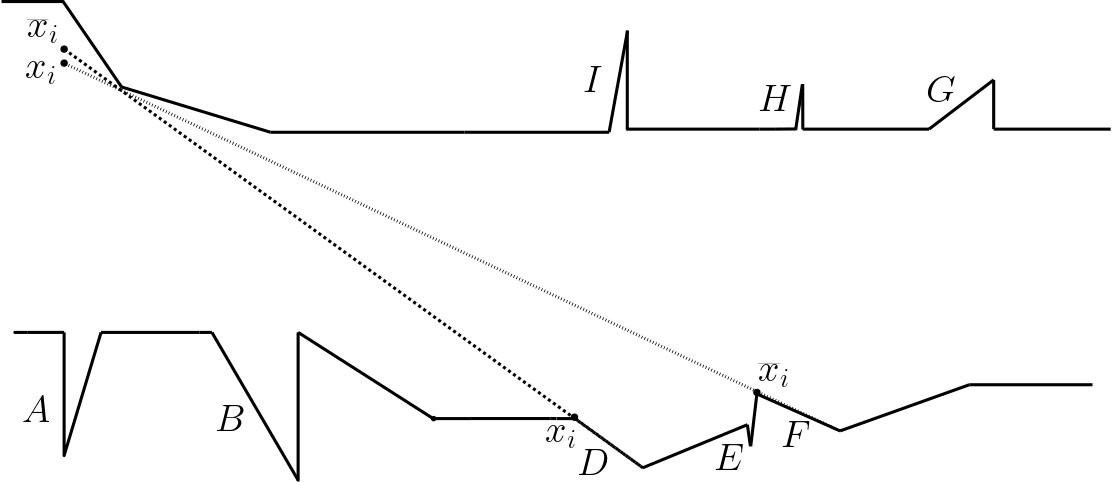}
    \caption{Starting pattern for $x_i$ interacting with first variable pattern.}
    \label{fig:nphardStartingVariableInteraction}
\end{figure}

\noindent \emph{Connecting a Starting Pattern to a Variable Pattern}: We will consider an arbitrary variable $x_i$. In the starting pattern for $x_i$, there was a line segment of potential guard locations, namely anything above the $a_1$ line segment on $l_{v_2}$, see Figure \ref{fig:nphardStarting}. The starting pattern is connected to the variable pattern in such a way that any guard placed in the $x_i$ region will see edge $F$ in the subsequent variable pattern. Any guard placed in the $\overline{x_i}$ region will see edge $D$ in the variable pattern. A guard placed in the $\overline{x_i}$ region cannot see the $F$ edge in the first variable pattern because vertex $v_1$ in the starting pattern prevents it from seeing too far right. A guard placed in the $x_i$ region cannot see the $D$ edge in the variable pattern because the angle of the $D$ edge is such that it goes above the $x_i$ region in the starting pattern. In the starting pattern between the $x_i$ and $\overline{x_i}$ region is an ``empty'' region that sees neither $D$ nor $F$ in the first variable pattern for $x_i$.

\begin{figure}[ht]
    \centering
    \includegraphics[scale=0.38]{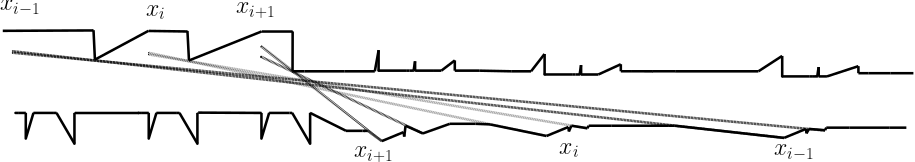}
    \caption{Multiple starting patterns interacting with multiple variable patterns.}
    \label{fig:nphardMultipleStartingVariableInteraction}
\end{figure}

\noindent \emph{Connecting Multiple Starting Patterns to Multiple Variable Patterns}: Consider three consecutive starting patterns for $x_{i-1}, x_i$ and $x_{i+1}$. As a reminder, the ordering of the variable patterns in the first grouping of variable patterns is $x_{i+1}, x_i, x_{i-1}$, see Figure \ref{fig:nphardMultipleStartingVariableInteraction}. No guard placed in the starting pattern of $x_{i-1}$ will see edges $D$ and $F$ in the variable pattern for $x_i$ since those edges are angled in such a way that they do not see to the left of the starting pattern for $x_i$. To ensure no guard placed in the $x_{i+1}$ starting pattern sees the $D$ or $F$ edges in the $x_i$ variable pattern, the $v_1$ vertex of the $x_{i+1}$ starting pattern is used to block the $x_{i+1}$ line segments from seeing right of the first $x_{i+1}$ variable pattern.

\begin{figure}[ht]
    \centering
    \includegraphics[scale=0.26]{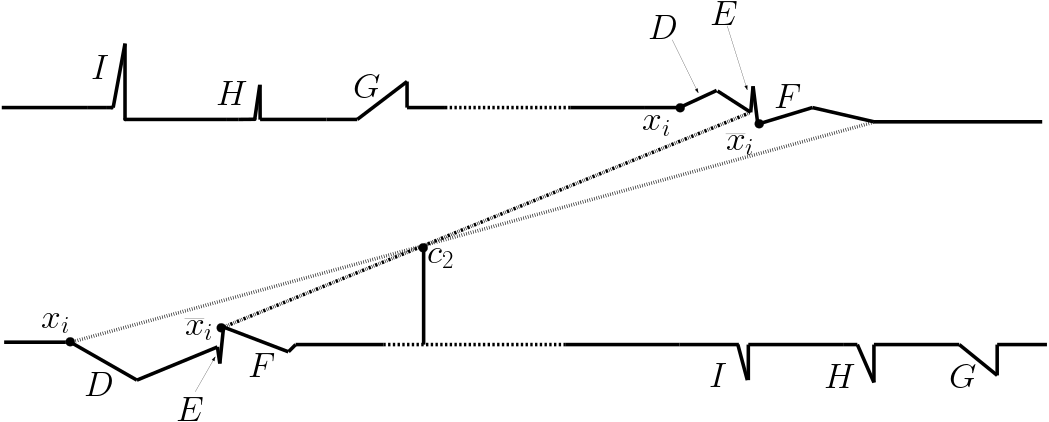}
    \caption{A variable pattern being mirrored rightward.}
    \label{fig:nphardVariableMirroring}
\end{figure}

\noindent \emph{Connecting a Variable Pattern to a Variable Pattern}: Consider the first variable pattern for $x_i$. A first variable pattern for $x_i$ is connected to the subsequent variable pattern for $x_i$ in such a way that any guard placed in the $x_i$ region will see edge $F$ in the subsequent variable pattern. Any guard placed in the $\overline{x_i}$ region will see edge $D$ in the subsequent variable pattern. See Figure \ref{fig:nphardVariableMirroring}. 
A guard placed in the $x_i$ region cannot see the $D$ edge in the next variable pattern because the angle of the $D$ edge is such that it goes to the right of the $x_i$ region in the previous variable pattern. A guard placed in the $\overline{x_i}$ region cannot see the $F$ edge in the next variable pattern because a $c_2$ vertex, which is to the right of the sequence of variable patterns, prevents it from seeing too far right. 

\begin{figure}[ht]
    \centering
    \includegraphics[scale=0.35]{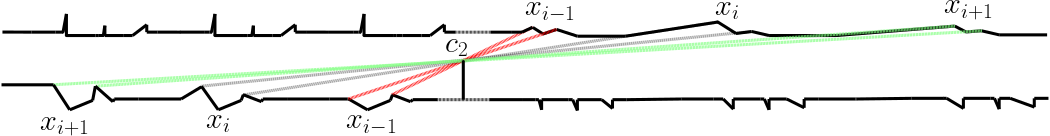}
    \caption{Multiple variable patterns being mirrored rightward.}
    \label{fig:nphardMultipleVariablePatterns}
\end{figure}

\noindent \emph{Connecting Multiple Variable Patterns to Multiple Variable Pattern}: We will consider consecutive variables $x_{i-1}, x_i$ and $x_{i+1}$, see Figure \ref{fig:nphardMultipleVariablePatterns}. To ensure guards placed in the floor $x_{i-1}$ variable pattern do not see the $D$ or $F$ edges in the subsequent ceiling variable pattern for $x_i$, the $c_2$ vertex is used to block the guards from seeing too far to the right. No guards placed in the floor variable pattern of $x_{i+1}$ will see edges $D$ and $F$ in the subsequent ceiling variable pattern for $v_i$ since the $D$ and $F$ edges are angled in such a way that they do not see to the left of the first variable pattern for $x_i$.

\noindent \emph{Clauses}: For each clause $c$ in the 3SAT instance, there is a sequence of variable patterns $x_1, \ldots, x_n$ along either the floor or ceiling of the polygon. Immediately to the right of one such sequence of variable patterns exists a clause pattern. A clause pattern consists of one vertex such that the vertex is only seen by the variable patterns corresponding to the literals in the clause; see Figure \ref{fig:nphardClauseCompleteExample}. The distinguished vertex of the clause pattern is the $c_3$ vertex. This vertex is seen only by specific vertices in its respective sequence of variable patterns.

\begin{figure}[ht]
    \centering
    \includegraphics[scale=0.32]{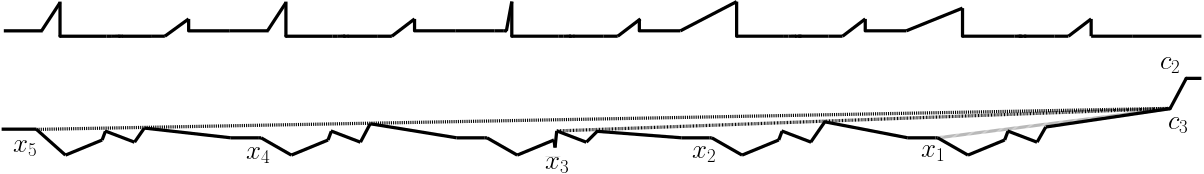}
    \caption{High level overview of a clause point $c_3$ being placed for clause $c_i = x_1 \vee \overline{x_3} \vee x_5$.}
    \label{fig:nphardClauseCompleteExample}
\end{figure}

To see how a clause distinguished point is placed in the polygon, consider Figure \ref{fig:nphardClauseCompleteExample} that represents the clause $x_1 \vee \overline{x_3} \vee x_5$. For the floor variable patterns, all $\overline{x_i}$ vertices are blocked from seeing the $c_3$ clause point by the $v_{14}$ vertex in their variable pattern. All $x_i$ points are blocked from seeing $c_3$ by the $\overline{x_i}$ vertex in their variable pattern. In the case of $\overline{x_3}$, the $v_{14}$ vertex is lowered an epsilon amount such that the $v_{14}$ vertex is no longer blocking $\overline{x_3}$ from seeing $c_3$. In the case of $x_1$ and $x_5$, consider the ray that is shot from $v_{10}$ through $x_i$ in the variable pattern. Push $x_i$ up an epsilon amount on that ray until it sees the $c_3$ distinguished point. Moving the $x_i$ vertex a small amount will force the $I$ and $G$ edges in the local variable pattern to be modified slightly. However, no global visibility is affected with this small potential movement of $x_i$. Once all $x_i$ vertices are placed for the first clause, then $c_2$ is drawn such that the visibilities moving rightward are correct. Therefore, $c_2$ still blocks each respective vertex from seeing too far to the right.

\noindent \emph{Putting it all together:} We choose our truth values for each variable in the starting variable patterns. The truth values are then mirrored in turn between variable patterns on the floor and the ceiling. Consider the example of Figure \ref{fig:nphardClauseCompleteExample} the 3SAT clause corresponds to $c_i = (x_1 \vee \overline{x_3} \vee x_5)$. Hence, a guard placement that corresponds to a truth assignment that makes $c_i$ true, will have at least one guard on or near the $x_1, \overline{x_3}$ or $x_5$ vertex and can therefore see vertex $c_3$ without additional guards. We still have variables $x_2$ and $x_4$ on the polygon, however, none of them nor their negations see the vertex $c_3$. They are simply there to transfer their truth values in case these variables are needed in later clauses.

The monotone polygon we construct consists of $11n + (16n+3)m + 2$ vertices where $n$ is the number of variables and $m$ is the number of clauses. Each starting variable pattern has 11 vertices, each variable pattern 16 vertices, the clause pattern has 3 vertices, plus 2 vertices for the leftmost and rightmost points of the polygon. Exactly $K = n*(1 + 2m) + 1$ guards are required to guard the polygon. A guard is required to see the distinguished edges of the starting patterns and 2 guards are required at every variable pattern, of which there are $(mn)$ of them. Lastly, since a starting pattern cannot begin at the leftmost point, a guard is required at the leftmost vertex of the polygon. If the 3SAT instance is satisfiable, then $K = n*(1 + 2m) + 1$ guards are placed at locations in accordance with whether the variable is true or false in each of the patterns. Each clause vertex is seen since one of the literals in the associated clause is true and the corresponding $c_3$ clause vertex is seen by some guard.

\begin{theorem}
Finding the smallest point guard cover for a monotone polygon using half-guards is NP-hard.
\end{theorem}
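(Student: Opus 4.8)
The plan is to formalize the 3SAT reduction outlined in Section~\ref{details}. Given a 3SAT instance $(X,C)$ with $|X|=n$ and $|C|=m$, I would assemble the monotone polygon $P$ from left to right: the leftmost vertex $l$; the $n$ starting patterns for $x_1,\dots,x_n$; then $m$ consecutive ``blocks,'' each block being a sequence of $n$ variable patterns (alternating between floor and ceiling from block to block, and running through the variables in the order forced by the mirroring, which reverses from one block to the next) followed by a clause vertex $c_3$ positioned as in Figure~\ref{fig:nphardClauseCompleteExample}; and finally the rightmost vertex. Coordinates are fixed so that the distinguished edges $A,B$ of a starting pattern and $D,E,F,G,H,I$ of a variable pattern have exactly the visibility ranges described, and so that the connecting gadgets realize the routing ``a guard in the $x_i$ subregion sees $F$ in the next pattern, a guard in the $\overline{x_i}$ subregion sees $D$ in the next pattern,'' together with the blocking by the $v_1$ vertices and the $c_2$ vertices. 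First I would verify the bookkeeping: $P$ has $11n+(16n+3)m+2$ vertices, it is $x$-monotone, and all coordinates can be written with polynomially many bits (the $\varepsilon$-perturbations used to wire the clause vertices can be taken as successive halvings, so only $O(nm)$ additional bits are needed), hence $P$ is constructible in polynomial time. It then remains to prove that $P$ admits a half-guard cover of size at most $K=n(1+2m)+1$ if and only if $(X,C)$ is satisfiable.

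For the completeness direction ($(X,C)$ satisfiable $\Rightarrow$ $K$ guards suffice): given a satisfying assignment, place one guard at $l$, one guard in the $x_i$ subregion of the starting pattern for $x_i$ if $x_i$ is true and in the $\overline{x_i}$ subregion if $x_i$ is false, and, for each of the $mn$ variable patterns, the canonical pair $\{x_i,v_{11}\}$ when the propagated value is true and $\{v_1,\overline{x_i}\}$ when it is false. This uses $1+n+2mn=K$ guards. The case analysis of Section~\ref{details} shows each guard sees the distinguished edges of its own pattern, and that because the truth value is mirrored consistently the previously placed guard always supplies $D$ or $F$ to the next variable pattern, so those two guards see all of $D,E,F,G,H,I$; the remaining ``filler'' parts of each gadget and of the joins between gadgets are covered by the same guards by construction; and each clause vertex $c_3$ is seen because its clause has a true literal whose corresponding $x_j$ or $\overline{x_j}$ guard sees $c_3$. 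Hence $P$ is guarded by $K$ half-guards.

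For the soundness direction ($K$ guards suffice $\Rightarrow$ $(X,C)$ satisfiable), which I expect to be the crux, I would first establish a matching lower bound: the edges $A,B$ of each starting pattern are seen only from within that starting pattern, the edges $E,G,H,I$ of each variable pattern only from within that variable pattern, no single guard sees both $H$ and $I$, and all of these visibility ranges are pairwise disjoint (and disjoint from $l$); therefore any feasible guard set needs at least one guard for $l$, at least one per starting pattern, and at least two per variable pattern, i.e.\ at least $K$ guards. Consequently a guard set of size exactly $K$ spends precisely its quota on each gadget, so no guard is ``free'' to help a neighbouring gadget except through the designed interactions. Combining this with the variable-pattern analysis (if neither $D$ nor $F$ is seen by an earlier guard, that pattern needs three guards, exceeding the budget) forces the following canonical form: each starting-pattern guard lies above $a_i$ on $l_{v_2}$ and in the $x_i$ or $\overline{x_i}$ subregion; each variable pattern's pair is $\{x_i,v_{11}\}$ or $\{v_1,\overline{x_i}\}$; and the blockers ($v_1$ of the starting patterns, the $c_2$ vertices between variable patterns) force the $x_i$/$\overline{x_i}$ choice to be the same in every copy of $x_i$'s pattern. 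Declaring $x_i$ true exactly when its guards sit in the $x_i$ subregion (at the $x_i$ vertex) yields a well-defined assignment; finally, since the budget is exhausted there is no spare guard for any clause vertex $c_3$, and among the canonical guard locations only the $x_j$/$\overline{x_j}$ vertices of the three literals of that clause see $c_3$, so at least one literal of each clause is true and the assignment is satisfying.

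The main obstacle is the soundness direction, specifically three points that must be nailed down rigorously: (i) that the visibility ranges of all $O(nm)$ distinguished edges and vertices are genuinely pairwise disjoint, so the lower bound ``$\ge K$'' is airtight; (ii) that a tight budget really does pin every guard to a canonical location --- this is exactly where point guarding is far more delicate than vertex guarding, since one must rule out off-vertex guards seeing a useful combination of distinguished edges, which is the content of the three sub-cases in the variable-pattern argument of Section~\ref{details} together with the analogous starting-pattern argument; and (iii) that the small $\varepsilon$-perturbations used to connect the clause vertices (and the induced slight changes to the $I$ and $G$ edges of the perturbed variable patterns) do not disturb any of the global visibility relations on which (i) and (ii) rely, which I would handle by choosing the perturbation scale small relative to all relevant angular gaps.
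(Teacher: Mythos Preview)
Your proposal is correct and follows essentially the same approach as the paper: the same 3SAT reduction, the same gadget inventory (starting patterns, variable patterns, clause vertices with the $D/F$ mirroring and $c_2$/$v_1$ blockers), the same guard budget $K=n(1+2m)+1$, and the same two-direction argument. Your write-up is in fact more explicitly structured than the paper's---you separate completeness from soundness, make the per-gadget lower bound and the ``tight budget forces canonical placement'' step explicit, and flag the $\varepsilon$-perturbation issue---but none of this departs from the paper's line of reasoning.
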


\section{Conclusion and Future Work}\label{conclusion}
In this paper, we present an 8-approximation for point guarding a monotone polygon with half-guards. We also show that point guarding a monotone polygon with half-guards is NP-hard. Future work might include finding a better approximation for both the point guarding and vertex guarding version of this problem. Insights provided in this paper may help with guarding polygons where the guard can choose to see either left or right, or in other natural directions. One may also be able to use these ideas when allowing guards to see $180$\textdegree\ but guards can choose their own direction ($180$\textdegree-floodlights).

\bibliography{main}

\end{document}